\documentclass[11pt]{article}

\usepackage[letterpaper,margin=1in]{geometry}
\usepackage[T1]{fontenc}
\usepackage{amsmath,amssymb,amsthm,mathtools}
\usepackage{newtxtext,newtxmath}
\usepackage{microtype}
\usepackage[hidelinks]{hyperref}
\usepackage[nameinlink,noabbrev]{cleveref}
\usepackage{enumitem}

\setlist{nosep,leftmargin=*}
\allowdisplaybreaks

\newtheorem{theorem}{Theorem}[section]
\newtheorem{lemma}[theorem]{Lemma}
\newtheorem{corollary}[theorem]{Corollary}

\newcommand{\IC}{\mathrm{IC}}
\newcommand{\GapMAJ}{\mathsf{GapMAJ}}
\newcommand{\E}{\mathbb E}
\newcommand{\Var}{\operatorname{Var}}

\title{Constant-Loss Round-Preserving Direct Sums for Information Complexity}
\author{Haoran Wang\\\small\texttt{whr.hrwang@gmail.com}}
\date{}

\begin{document}
\maketitle

\begin{abstract}
Strong composition theorems characterize when solving a composed problem requires essentially the sum of the information costs of its components. Sawettamalya and Yu proved an optimal Gap-Majority composition theorem, and showed that preserving the number of communication rounds incurs an additional logarithmic loss. We prove a round-preserving variance direct sum with constant loss for every fixed finite terminal-answer alphabet. The result applies to independent square-integrable observables and yields the constant-loss Gap-Majority lemma without increasing the number of messages.

The proof encodes centered counterfactual conditional means. Their total squared magnitude is controlled by a telescoping variance identity, allowing entropy bounds for the auxiliary messages that do not depend on the length of the sum. We also show that a terminal channel of likelihood rank at most $s$ can be replaced, without changing the transcript or increasing observer variance, by one with at most $s$ output symbols. This gives a rank-only direct sum for large and continuous terminal-answer spaces. For unrestricted outputs, we obtain a quantitative mean-square extraction theorem under a fourth-moment assumption.
\end{abstract}

\section{Introduction}
For a Boolean function $f:\mathcal X\times\mathcal Y\to\{\pm1\}$, the gap-majority composition asks Alice and Bob to distinguish $\sum_{i=1}^n f(X_i,Y_i)\ge 0.01\sqrt n$ from $\sum_{i=1}^n f(X_i,Y_i)\le-0.01\sqrt n$. Since the inputs are independent, one can solve the problem by estimating each instance with sufficiently small error, calibrating any systematic bias, and summing the estimates. For instance, an unbiased estimate with sufficiently small constant variance makes the sum accurate on the $\sqrt n$ scale. A strong composition theorem asks whether the converse holds: does a successful protocol for the sum reveal information proportional to the number of instances?

Information cost makes this question precise. For a protocol with transcript $M$ and public randomness $R$, it measures the information the transcript reveals about each player's input beyond the other player's input. Unlike communication length, it has an exact chain rule, which makes it well suited to direct-sum arguments~\cite{BBCR10,BR11,Bra15}. Preserving the number of messages is equally important in one-way and bounded-round communication. A reduction that adds interaction need not preserve the relevant lower bound.

\subsection{Prior work and the round constraint}
Strong composition theorems in two-party communication were first developed for the identity and XOR outer functions. Round-preserving compression and bounded-round XOR lemmas provided tools for retaining the interaction structure~\cite{BRWY13,Yu22,IR24,SY25}. Sawettamalya and Yu~\cite{SY26} proved a strong Gap-Majority lemma with information cost $\mathrm{IC}(\pi)/n+O(1)$. Their reduction does not preserve the original number of messages. They also gave a round-preserving reduction with additive $O(\log n)$ information loss. The question left open was whether constant loss and round preservation can hold simultaneously.

The extra logarithm arises during coordinate decomposition, rather than from the information chain rule. Alice must prepare the conditional means required for each possible final answer, even though Bob chooses the answer only after her last message. Encoding these means at a uniform precision has a cost that grows with their range. The approach here is to encode only the part not determined by the common view. Its range may still grow with $n$, but its total squared magnitude admits a global bound.

\subsection{Main results}

We work with independent input pairs $(X_i,Y_i)\sim\mu_i$ and square-integrable real observables $g_i(X_i,Y_i)$. Let $\pi$ be an $r$-message protocol in which Alice sends the last message and Bob produces a terminal answer $A$. Write $M$ for its transcript and $R$ for its public randomness. The internal information cost is
\[
\IC(\pi)=I(M:X\mid Y,R)+I(M:Y\mid X,R).
\]
For a square-integrable random variable $Z$, following~\cite{SY26} we write
\[
\Phi_{\rm obs}(\pi;Z)=\E\Var(Z\mid M,R,A),
\qquad
\Phi_{\rm B}(\pi;Z)=\E\Var(Z\mid M,R,A,Y).
\]
The two quantities correspond to the information available to an outside observer and to Bob, respectively.

Our main theorem is stated for heterogeneous product distributions. Assume that, conditioned on every transcript and public random string, the support of Bob's terminal answer has size at most $m$, with a measurable enumeration of that support. Put
\[
\sigma^2=\frac1n\sum_{i=1}^n \Var(g_i(X_i,Y_i)).
\]

\begin{theorem}\label{thm:main}
For every $0<\Delta\le \sigma^2$, there are $r$-round single-coordinate protocols $\eta_1,\ldots,\eta_n$ such that
\[
\frac1n\sum_{i=1}^n \IC_{\mu_i}(\eta_i)
\le
\frac{\IC(\pi)}{n}
+C m\log_2\!\left(2+\frac{\sigma^2}{\Delta}\right)
\]
and
\[
\frac1n\sum_{i=1}^n \Phi_{\rm B}(\eta_i;g_i)
\le
\frac1n\Phi_{\rm obs}\!\left(\pi;\sum_{i=1}^n g_i(X_i,Y_i)\right)+\Delta,
\]
where $C$ is an absolute constant.
\end{theorem}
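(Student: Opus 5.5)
We follow the standard embedding route of the Gap-Majority reductions~\cite{SY26}, changing only how the conditional means are transmitted. For coordinate $i$, the protocol $\eta_i$ places the real input $(X_i,Y_i)$ at position $i$. It samples $X_{<i}$ and $Y_{>i}$ publicly. Alice privately samples $X_{>i}$ given $Y_{>i}$, and Bob privately samples $Y_{<i}$ given $X_{<i}$. The parties then simulate the $r$ messages of $\pi$ exactly as in $\pi$. By the chain rule, and by the choice of which side's coordinates are public, $\sum_i \IC_{\mu_i}$ of this simulated part is at most $\IC(\pi)$. This is the usual argument, and it adds no rounds. The difficulty comes afterwards. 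For Bob's variance to be small, he must output an estimate of $g_i$ built from the conditional mean $\E[\sum_j g_j\mid M,R,A]$. That mean depends on $X_{>i}$, which Alice holds. Bob picks $A$ only after Alice's last message, so Alice must append, to that last message, information covering every one of the at most $m$ possible answers $a$. This is where a naive encoding loses $\log n$.

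For each $a$ in the support (using the measurable enumeration), define the counterfactual conditional means
\[
\mu_i^{a}=\E\bigl[g_{>i}\text{-part}\mid M,R,A=a,X_{\le i},Y_{\le i},X_{>i}\bigr]
-\E\bigl[\,\cdot\mid M,R,A=a,X_{<i},Y_{\le i}\bigr].
\]
Concretely, let $S_{>i}=\sum_{j>i}g_j$. Let $D_i^a$ be the difference between the conditional expectation of $S_{\ge i}$ given Alice's extended view and given the common view (public coins, transcript, $A=a$, $Y_i$). Alice sends only $D_i^a$, quantized to precision $\delta_i$. Bob already knows the common-view part, so he can reconstruct the full conditional mean up to the quantization error. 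The key estimate is a telescoping variance identity. Conditioned on $(M,R,A=a)$, the increments of the Doob martingale obtained by revealing coordinates one at a time are orthogonal. Summing over $i$ therefore gives
\[
\sum_i \E (D_i^a)^2 \le \Var\Bigl(\sum_j g_j\Bigr)\le n\sigma^2 .
\]
This holds for each counterfactual $a$. Since the $a$ are fixed symbols, we take an expectation weighted appropriately and bound the rest by $m$ times the total.

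Next we encode. Alice sends, for each $a$, an entropy-coded (prefix-free, e.g.\ Elias-type) quantization of $D_i^a$ at precision $\sqrt{\Delta}$. The number of bits needed is $O(\log(2+|D_i^a|^2/\Delta))$ in expectation. The information cost added to $\eta_i$ is at most the entropy of this auxiliary message. Averaging over $i$ and applying Jensen to the concave function $\log(2+x)$ gives
\[
\frac1n\sum_i \E\log\Bigl(2+\tfrac{(D_i^a)^2}{\Delta}\Bigr)\le \log\Bigl(2+\tfrac{\sigma^2}{\Delta}\Bigr).
\]
Summing over the $m$ answers yields the additive $Cm\log_2(2+\sigma^2/\Delta)$. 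This bound is independent of $n$, because the ranges of the individual $D_i^a$ are never used, only their total square. The quantization error adds at most $\Delta$ per coordinate to Bob's variance. After Bob computes $A=a$, he outputs the estimate corresponding to $a$.

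Finally, the variance bound. Bob's estimate for $g_i$ is the decoded increment; exactness holds up to the quantization error. By the law of total variance and orthogonality of the martingale increments, $\sum_i\Phi_{\rm B}(\eta_i;g_i)$ is at most $\E\Var(\sum g_j\mid M,R,A)$ plus $n\Delta$, which is the claim after dividing by $n$.

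The main obstacle will be setting up the martingale so that three things hold together. The increment Alice encodes must be measurable with respect to her view in $\eta_i$. What Bob subtracts must be computable from his view. And the increments for the counterfactual answers $a\ne A$ must still telescope. The last point requires conditioning on the event $A=a$ through Bob's answer channel, with $Y_{<i}$ resampled, rather than on the realized answer. I would first handle this by defining everything through the kernel $P(A=a\mid M,R,Y)$ and checking the orthogonality directly.
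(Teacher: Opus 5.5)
There is a genuine gap, and it sits exactly where the logarithmic loss of the earlier round-preserving reduction comes from: the counterfactual branches. Alice's message for branch $a$ is sent whatever Bob answers, so its entropy depends on $\E(D^a)^2$ under the \emph{unconditioned} law of her view. Any orthogonality or total-variance argument controls only the realized branch, $\E\sum_a\Pr(A=a\mid\mathcal H)(D^a)^2$, where $\mathcal H$ is Alice's view. Routing the argument through the kernel $\Pr(A=a\mid M,R,Y)$ does not change this.

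Because Alice's view is correlated with $Y$, $\Pr(A=a\mid\mathcal H)$ can be tiny on a constant-probability set while $q_a(M,R)$ is of constant order. On that set $D^a$ is a conditional mean given a nearly null event. It can be of order $n$ for $\pm1$ observables, so quantization at precision $\sqrt\Delta$ costs $\Theta(\log n)$. Your per-branch claim also fails as stated: conditioning on a rare event can make $\Var(\sum_j g_j\mid A=a)$ exceed $n\sigma^2$, and only the $\Pr(A=a)$-average is bounded. No ``appropriate weighting'' is available, because the weights $\Pr(A=a\mid\mathcal H)$ are not common-view measurable.

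The missing idea is the $\rho$-resampling of Bob's answer. Bob keeps $A$ with probability $1-\rho$ and otherwise redraws from $q(\cdot\mid M,R)$. This costs at most $(2\rho-\rho^2)\Var(Z)$ of observer variance and gives $\Pr[B=a\mid\mathcal H]\ge\rho q_a(M,R)$ for every pre-answer $\mathcal H$. This invariant must survive all peeling stages (Lemma~\ref{lem:invariant}). It is combined with the common-view-computable mesh $h_{k,a}^2=4\delta/(mq_a(M,R))$; a uniform mesh would still fail on branches with small $q_a$. Together these give $\sum_a\E D_{k,a}^2/h_{k,a}^2\le\frac{m}{4\rho\delta}\E D_{k,B}^2$, and choosing $\rho\asymp\Delta/\sigma^2$ keeps the logarithm at $O(\log(2+\sigma^2/\Delta))$.

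Second, your telescoping does not exist in the flat embedding. The public views of $\eta_i$ and $\eta_{i+1}$ are not nested: $\eta_{i+1}$ gains $X_i$ but loses $Y_{i+1}$. The conditioned variable $S_{\ge i}$ also changes with $i$. So the $D_i^a$ are not increments of a single Doob martingale. Concretely, Alice knows $X_{>i}$ and $Y_{>i}$ in $\eta_i$, so your $D_i^a$ contains $S_{>i}-\E[S_{>i}\mid\text{common view}]$. Its second moment can be of order $(n-i)\sigma^2$ when $\pi$ reveals little, so the sum is of order $n^2\sigma^2$, not $n\sigma^2$. Your variance claim $\sum_i\Phi_{\rm B}(\eta_i;g_i)\le\Phi_{\rm obs}+n\Delta$ is likewise unsupported.

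The paper instead peels recursively. The residual child carries the exposed $Y_{\ge k}$ and all earlier quantized innovations in its transcript, so the observer sigma-fields $\mathcal F_k$ increase. The new vector $K_k$ goes only to the residual child, where it ensures $V_{k-1}\le U_k+\delta$. The coordinate child receives $X_{<k}$ publicly and, by the rectangle property, has Bob variance $v_k=\E\Var(g_k\mid\mathcal G_k)$. The energy estimate is then $G_k\le 2(V_k-v_k-V_{k-1}+\delta)+2\Var(g_k)$. This telescopes over observer variances, not over martingale increments.
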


The theorem also covers weighted sums by incorporating the weights into $g_i$. For i.i.d. inputs, a uniformly chosen coordinate gives a single protocol with the same bounds.

For Gap-Majority, the terminal answer is binary and $\sigma^2=1$. Sawettamalya and Yu prove that a protocol which computes $\GapMAJ_n\circ f^n$ with success probability $0.99$ has observer variance at most $0.99n$~\cite{SY26}. Taking a fixed constant $\Delta$ in \cref{thm:main} therefore gives the conjectured same-round constant-loss extraction.

\begin{corollary}\label{cor:gapmaj}
Let $f:\mathcal X\times\mathcal Y\to\{\pm1\}$ satisfy $\E_\mu f=0$. If an $r$-round protocol $\pi$ computes $\GapMAJ_n\circ f^n$ with probability at least $0.99$ under $\mu^n$, then there is an $r$-round protocol $\eta$ for one copy of $f$ such that
\[
\Pr[\eta=f]\ge0.501
\qquad\text{and}\qquad
\IC_\mu(\eta)\le \frac{\IC_{\mu^n}(\pi)}{n}+O(1).
\]
\end{corollary}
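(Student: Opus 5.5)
The corollary follows by specializing \cref{thm:main} and then converting a variance bound into a prediction advantage. Take $g_i=f(X_i,Y_i)$ for every $i$. Since $f$ is $\pm1$-valued with mean zero, $\Var(g_i)=1$ and $\sigma^2=1$. Bob's terminal answer is the binary output of $\GapMAJ$, so $m=2$. By the result of Sawettamalya and Yu quoted before the corollary, a protocol that succeeds with probability $0.99$ has
\[
\Phi_{\rm obs}\Bigl(\pi;\sum_i f(X_i,Y_i)\Bigr)\le 0.99n.
\]
We apply \cref{thm:main} with a fixed constant $\Delta$, say $\Delta=0.001$. This gives $r$-round protocols $\eta_1,\dots,\eta_n$ with
\[
\frac1n\sum_i\IC_\mu(\eta_i)\le \frac{\IC(\pi)}{n}+2C\log_2(1002)=\frac{\IC(\pi)}{n}+O(1)
\]
and
\[
\frac1n\sum_i\Phi_{\rm B}(\eta_i;f)\le 0.991.
\]
Because the inputs are i.i.d., we let $\eta$ pick a uniformly random $J\in[n]$ using public randomness and run $\eta_J$ on the single copy. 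Public randomness is conditioned on in the definition of $\IC$, so $\IC_\mu(\eta)$ is the average of the $\IC_\mu(\eta_i)$. Likewise $\Phi_{\rm B}(\eta;f)\le 0.991$, since conditioning on $J$ is part of the view.

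It remains to turn the bound on Bob's conditional variance into a success probability. At the end of $\eta$, Bob knows $(M,R,A,Y)$. Let $p=\Pr[f=1\mid M,R,A,Y]$. Then $\E[f\mid\cdot]=2p-1$, and since $f\in\{\pm1\}$,
\[
\Var(f\mid\cdot)=1-(2p-1)^2.
\]
Bob outputs $\mathrm{sign}(2p-1)$. His success probability is
\[
\E\max(p,1-p)=\tfrac12+\tfrac12\E|2p-1|\ge \tfrac12+\tfrac12\E(2p-1)^2=\tfrac12+\tfrac12\bigl(1-\Phi_{\rm B}(\eta;f)\bigr),
\]
where the inequality uses $|2p-1|\le1$. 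Plugging in $\Phi_{\rm B}\le0.991$ gives success probability at least $0.5045\ge0.501$.

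Adding this final output step does not add a message. Bob's output is a function of what he already holds, and it does not change the transcript, so $\IC$ is unchanged. The round count stays $r$ because \cref{thm:main} preserves rounds.

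The only nonroutine input is the observer-variance bound $0.99n$ from~\cite{SY26}. That bound is taken as given here. The remaining points to check are bookkeeping: that $\E_\mu f=0$ gives $\sigma^2=1$, which is the normalization the cited bound assumes, and that Bob's answer alphabet has the bounded-support structure the theorem requires, which holds trivially with $m=2$.
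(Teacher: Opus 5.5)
Your proposal is correct and follows the paper's argument: apply \cref{thm:main} with $m=2$ and a constant $\Delta$, average over a uniformly chosen coordinate, and convert Bob's variance bound into a posterior-majority success probability. Your inequality $\E|2p-1|\ge\E(2p-1)^2$ yields the same bound $1-\Phi_{\rm B}/2$ as the paper's estimate $4p(1-p)\ge 2p$ for the minority mass, and your choice $\Delta=0.001$, in place of the paper's $0.008$, only leaves more slack.
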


For an arbitrary terminal output, the relevant complexity parameter may be its likelihood rank rather than its cardinality. Section~\ref{sec:rank} gives a terminal-output reduction that replaces a rank-$s$ channel by an at-most-$s$-symbol channel while preserving the variance relevant to the theorem. Section~\ref{sec:mmse} treats terminal channels of unrestricted rank by a quantitative approximation.

\subsection{Proof overview}
Fix a stage of the coordinate decomposition with $k$ coordinates remaining. For a possible terminal answer $a$, let $Z_{k,a}$ be the conditional mean of the residual sum, given Alice's view and the event that Bob outputs $a$. Alice encodes the difference $D_{k,a}$ between this mean and its prediction from the common view. A small posterior resampling of Bob's answer ensures that the squared innovations of all hypothetical branches can be charged to the branch that is actually realized.

Write $G_k=\E D_{k,B}^2$ for the realized squared innovation, $V_k$ for the current observer variance, and $v_k$ for the extracted coordinate's Bob variance. Conditional orthogonality gives
\[
G_k\le 2(V_k-v_k-V_{k-1}+\delta)+2\Var(g_k).
\]
After summing over $k$, the variance terms telescope. An entropy bound for quantized real variables then converts the resulting total $L_2$ estimate into an information-cost bound. The underlying coordinate decomposition is that of~\cite{SY26}; the new step is the centered message and its variance-based cost analysis. For low-rank terminal channels, a separate convex-geometric reduction compresses the original answer before peeling. It requires no recursive rank condition or quantitative bound on the likelihood ratios.

\section{Preliminaries}\label{sec:prelim}

We use the distributional two-party model. Alice receives $X$, Bob receives $Y$, the players share public randomness $R$, and they exchange an $r$-message transcript $M=(M_1,\ldots,M_r)$. Bob produces the terminal answer after the communication. Following~\cite{SY26}, the last of the $r$ messages is sent by Alice, and Bob produces the answer without further communication; in this convention $r$ is odd. The observer variance uses the complete transcript. Bob may use private randomness to choose the terminal answer. We take the input and answer spaces to be standard Borel, so the conditional laws used below are well defined. All auxiliary messages are appended to Alice's last message before Bob produces his terminal answer.

For product inputs we use the standard rectangle property in the form used in bounded-round direct-sum arguments~\cite{Yu22,SY26}. Conditioning on the transcript, public randomness, and disjoint fixed Alice and Bob coordinates preserves the required factorization of the remaining coordinates. A local postprocessing of Bob's terminal answer preserves this property because it only modifies the Bob-side factor.

We will also use the law of total variance. If $\mathcal F\subseteq\mathcal G$, then
\[
\E\Var(Z\mid\mathcal F)
=
\E\Var(Z\mid\mathcal G)
+
\E\bigl(\E[Z\mid\mathcal G]-\E[Z\mid\mathcal F]\bigr)^2.
\]

The only coding estimate needed in the proof is the following elementary entropy bound.

\begin{lemma}\label{lem:entropy}
Let $D$ be square integrable, let $h>0$ be measurable with respect to a sigma-field $\mathcal F$, and put $K=\lfloor D/h\rceil$. Then
\[
H(K\mid\mathcal F)
\le
c_0+\frac12\log_2\!\left(1+c_1\E\frac{D^2}{h^2}\right)
\]
for absolute constants $c_0,c_1$.
\end{lemma}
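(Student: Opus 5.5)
The plan is to reduce to a conditional maximum-entropy bound for an integer-valued variable with a prescribed second moment, and then average using concavity. Conditionally on $\mathcal F$, the variable $h$ is a fixed positive number. Put
\[
s=\E[D^2/h^2\mid\mathcal F].
\]
Since $K=\lfloor D/h\rceil$ satisfies $|K|\le |D/h|+\tfrac12$, we get $K^2\le 2D^2/h^2+\tfrac12$, and therefore
\[
\E[K^2\mid\mathcal F]\le 2s+\tfrac12 .
\]
So it suffices to show that every integer-valued random variable $K$ with $\E K^2\le t$ satisfies
\[
H(K)\le c_0'+\tfrac12\log_2(1+t).
\]
We then apply this conditionally and take expectations.

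For the unconditional bound I would use the standard Gibbs-inequality argument. Compare the law $p$ of $K$ with the reference distribution $q(k)\propto e^{-\lambda k^2}$ on $\mathbb Z$, with $\lambda=1/(2(t+1))$ (any $\lambda\asymp 1/(1+t)$ works). Nonnegativity of relative entropy gives
\[
H(K)\le -\sum_k p(k)\log_2 q(k)=\log_2 Z_\lambda+\lambda(\log_2 e)\,\E K^2,
\]
where $Z_\lambda=\sum_{k\in\mathbb Z}e^{-\lambda k^2}$. Comparing the sum with a Gaussian integral gives $Z_\lambda\le 1+\sqrt{\pi/\lambda}$, so
\[
\log_2 Z_\lambda\le \tfrac12\log_2(1+t)+O(1).
\]
Since $\lambda\,\E K^2\le t/(2(t+1))\le \tfrac12$, the second term is also $O(1)$. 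The absolute constants then follow, and the small-$t$ regime needs no separate treatment.

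Next we average. Conditional entropy is the expectation of the entropies of the conditional laws, so
\[
H(K\mid\mathcal F)\le c_0'+\tfrac12\,\E\log_2(1+2s+\tfrac12).
\]
Since $\log_2$ is concave, Jensen's inequality gives
\[
\E\log_2(1+2s+\tfrac12)\le \log_2\bigl(\tfrac32+2\E s\bigr)\le \log_2\tfrac32+\log_2\bigl(1+2\E[D^2/h^2]\bigr).
\]
This yields the lemma with $c_1=2$ and $c_0=c_0'+\tfrac12\log_2\tfrac32$. If $\E[D^2/h^2]=\infty$, the bound is trivial.

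The only delicate point is measure-theoretic. We need the conditional law of $K$ given $\mathcal F$ to be a regular conditional distribution, so that the pointwise bound can be applied conditionally. We also need $h$ to be constant on each conditional slice, which holds because $h$ is $\mathcal F$-measurable. The standard Borel assumptions in \cref{sec:prelim} provide this. Beyond that, the argument consists of the routine Gaussian-sum estimate.
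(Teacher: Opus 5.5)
Your proof is correct and has the same outline as the paper's: a conditional Gaussian maximum-entropy bound in terms of $\E[K^2\mid\mathcal F]$, the estimate $|K|\le|D|/h+\tfrac12$, and then Jensen. The only difference is the maximum-entropy step: you apply Gibbs' inequality against a discrete Gaussian on $\mathbb Z$ with a sum--integral comparison, whereas the paper dithers $K$ by an independent uniform $U$ on $[-\tfrac12,\tfrac12]$, so that $H(K\mid\mathcal F)=h(K+U\mid\mathcal F)$, and then invokes the continuous Gaussian bound directly.
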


\begin{proof}
Condition on $\mathcal F$ and add an independent uniform random variable $U$ on $[-1/2,1/2]$. The unit intervals around distinct integers are disjoint, so $H(K\mid\mathcal F)=h(K+U\mid\mathcal F)$. Gaussian maximum entropy bounds the latter by the logarithm of its conditional variance. Since $|K|\le |D|/h+1/2$, Jensen's inequality gives \cref{lem:entropy}.
\end{proof}

\section{Proof of the direct-sum theorem}\label{sec:proof}

We center each observable and write
\[
S_k=\sum_{i=1}^k\bigl(g_i(X_i,Y_i)-\E g_i(X_i,Y_i)\bigr).
\]
This does not change any conditional variance. The proof follows the round-preserving peeling of~\cite{SY26}, but the auxiliary message at each stage is replaced by a quantized innovation. The standard one-step information identities are recorded in the appendix.

\subsection{Counterfactual innovations}

Let $A$ be Bob's original terminal answer and let $q_a(M,R)=\Pr[A=a\mid M,R]$. Fix $0<\rho<1$. After all messages have been determined, Bob forms a new answer $B$ as follows: with probability $1-\rho$ he keeps $A$, while with probability $\rho$ he uses private randomness to resample an independent value from the distribution $q(\cdot\mid M,R)$. Thus
\[
\Pr[B=a\mid M,R]=q_a(M,R).
\]
Moreover, if $\mathcal G$ contains $(M,R)$ and is generated before this final resampling, then
\begin{equation}
\Pr[B=a\mid\mathcal G]
=(1-\rho)\Pr[A=a\mid\mathcal G]+\rho q_a(M,R)
\ge \rho q_a(M,R).
\label{eq:branch-domination}
\end{equation}
This randomization leaves the transcript unchanged. For every square-integrable $Z$, total variance also gives
\begin{equation}
\E\Var(Z\mid M,R,B)
\le
\E\Var(Z\mid M,R,A)+(2\rho-\rho^2)\Var(Z).
\label{eq:smoothing-variance}
\end{equation}
Indeed, conditioned on $(M,R)$, the conditional mean indexed by $B$ is the factor $(1-\rho)$ contraction of the conditional mean indexed by $A$ around their common average.

\begin{lemma}[Pre-answer domination]\label{lem:invariant}
In the successive round-preserving decompositions, the answer $B$ is generated only after all the auxiliary messages of the current protocol have been fixed. At every stage, the simulated original transcript and public randomness $(M,R)$ remain part of the common view, and the sampling law $q_a(M,R)$ is never updated after auxiliary messages are added. For any sigma-field $\mathcal H$ that contains $(M,R)$ and is generated by the inputs, public and private samples, and auxiliary messages before $B$ is drawn, and for any $\mathcal H$-measurable family $d_a$,
\[
\E d_B^2\ge\rho\E\sum_a q_a(M,R)d_a^2.
\]
In particular, this holds at every peeling stage with $d_a=D_{k,a}$.
\end{lemma}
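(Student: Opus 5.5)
The plan is to reduce the inequality to the pointwise branch-domination bound \eqref{eq:branch-domination}, by conditioning on $\mathcal H$ and expanding $d_B^2$ over the finitely many possible values of $B$.

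The first step is structural: checking that the hypotheses of \eqref{eq:branch-domination} hold at every stage. By construction, each peeling stage only appends auxiliary messages to Alice's last message and fixes coordinates. Bob's final resampling step is carried out last, and it uses the fixed kernel $q(\cdot\mid M,R)$, which is computed from the simulated original transcript and public randomness alone. So $q$ is never recomputed from the enlarged view. Any sigma-field $\mathcal H$ generated before $B$ is drawn, and containing $(M,R)$, then plays the role of $\mathcal G$ in \eqref{eq:branch-domination}. The reason is that, conditioned on $\mathcal H$, the resampled value is a fresh draw from $q(\cdot\mid M,R)$ using private randomness independent of $\mathcal H$, and it is selected with probability $\rho$, also independently. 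Hence
\[
\Pr[B=a\mid\mathcal H]=(1-\rho)\Pr[A=a\mid\mathcal H]+\rho\,q_a(M,R)\ge\rho\,q_a(M,R).
\]
A subtle point is that $A$ itself may be generated after some of the auxiliary messages, or may be correlated with them. This is harmless, because the first term in the decomposition is nonnegative and is simply dropped.

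The second step is the computation. Since $d_a$ is $\mathcal H$-measurable for every $a$, and the support of $B$ given $(M,R)$ is at most $m$ points with a measurable enumeration, we can write
\[
\E[d_B^2\mid\mathcal H]=\sum_a \Pr[B=a\mid\mathcal H]\,d_a^2\ge\rho\sum_a q_a(M,R)\,d_a^2 .
\]
The sum is effectively finite and is indexed through the enumeration, so measurability is not an issue. Taking expectations gives the claim, and monotone convergence covers the possibility that the right-hand side is infinite. The final sentence of the lemma follows by taking $\mathcal H$ to be the view at stage $k$ just before the answer is drawn, with $D_{k,a}$ measurable with respect to it; these are built from Alice's view and the common view.

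I expect the main obstacle to be bookkeeping rather than analysis. The issue is verifying that, across successive decompositions, the resampling randomness truly stays independent of $\mathcal H$, and that $q$ refers to the original $(M,R)$ rather than to an updated posterior. I would handle this by defining $B$ once, as a fixed post-processing kernel applied to $(A,M,R,\text{fresh coins})$, and noting that every later stage only enlarges the view generated before this kernel is applied.
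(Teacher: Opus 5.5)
Your proposal is correct and follows essentially the same route as the paper. You establish the invariant that every auxiliary message is computed from Alice's view and public samples before $B$ is drawn, while the resampling kernel stays the fixed map $(M,R)\mapsto q(\cdot\mid M,R)$, so the fresh coin gives $\Pr[B=a\mid\mathcal H]\ge\rho\,q_a(M,R)$; you then multiply by the $\mathcal H$-measurable $d_a^2$, sum over branches, and take expectations, exactly as the paper does, with your extra remarks (dropping the nonnegative $(1-\rho)\Pr[A=a\mid\mathcal H]$ term and using monotone convergence) being harmless refinements.
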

\begin{proof}
The original transcript and its public random string are simulated without change in both children. The current auxiliary messages are functions of Alice's view and the public samples; they are appended to her final message, with one message for every hypothetical value of the answer. They therefore depend neither on the realized $B$ nor on the coin used to resample it. The residual child uses precisely the same terminal-answer rule, including the same fixed map $(M,R)\mapsto q(\cdot\mid M,R)$, and the coordinate child uses its Bob-side version. These properties hold initially and are preserved when another auxiliary vector is appended, establishing the pre-answer invariant by induction. Conditional on $\mathcal H$, the resampling coin is independent and gives $\Pr[B=a\mid\mathcal H]\ge\rho q_a(M,R)$. Multiplication by $d_a^2$ and summation prove the claim.
\end{proof}

Apply this randomization once and keep the same $B$ through the entire coordinate decomposition. At the stage with $k$ live coordinates, let $T_k$ denote the current transcript, including the auxiliary data sent at earlier peeling stages, and let $R_k$ denote the public randomness of the parent protocol. In the residual child, $Y_k$ is newly sampled publicly and $X_k$ privately on Alice's side, as in~\cite{SY26}. For the split we put $F_k=(T_k,R_k,Y_k)$; the newly exposed $Y_k$ is not included in the parent observer variance $V_k$. For every posterior-supported value $a$ of $B$, define
\[
Z_{k,a}=\E[S_{k-1}\mid F_k,X_{<k},B=a],
\qquad
b_{k,a}=\E[S_{k-1}\mid F_k,B=a],
\]
and set $D_{k,a}=Z_{k,a}-b_{k,a}$. Alice can compute $Z_{k,a}$ from her current view for every hypothetical branch, while $b_{k,a}$ is determined by the common view and the specified answer value. The quantity $D_{k,a}$ is the branchwise information that is new to Bob.

Choose a branch-dependent mesh
\[
h_{k,a}^2=\frac{4\delta}{m q_a(M,R)}
\]
whenever $q_a(M,R)>0$, and append the signed integer $K_{k,a}=\lfloor D_{k,a}/h_{k,a}\rceil$ to Alice's last message. Branches of zero posterior probability are omitted. If the realized answer is $B=a$, Bob reconstructs $Z_{k,a}$ by $b_{k,a}+h_{k,a}K_{k,a}$. Since $B$ has conditional distribution $q(\cdot\mid M,R)$, the average squared reconstruction error is at most $\delta$.

Let
\[
G_k=\E D_{k,B}^2.
\]
By \cref{lem:invariant}, all $D_{k,a}$ are determined before the final draw of $B$ and the original posterior $q_a(M,R)$ remains unchanged. Thus branch domination applies even after conditioning on the earlier auxiliary messages. Hence
\[
G_k\ge \rho\,\E\sum_a q_a(M,R)D_{k,a}^2,
\]
and therefore
\begin{equation}
\sum_a \E\frac{D_{k,a}^2}{h_{k,a}^2}
\le
\frac{m}{4\rho\delta}G_k.
\label{eq:branch-energy}
\end{equation}

\subsection{The innovation-energy telescope}\label{sec:energy}

Let $V_k$ denote the observer variance of $S_k$ in the current parent protocol. Let $v_k$ be the Bob variance of the $k$th centered observable in the coordinate child, and let $V_{k-1}$ be the observer variance of the residual sum after the new quantized innovations have been appended. Write
\[
\mathcal F_k=\sigma(F_k,B),\qquad
\mathcal G_k=\mathcal F_k\vee\sigma(X_{<k}),\qquad
U_k=\E\Var(S_{k-1}\mid\mathcal G_k).
\]
The rectangle property, including the fact that $B$ is Bob-side postprocessing, makes $S_{k-1}$ and $g_k(X_k,Y_k)$ conditionally independent given $\mathcal G_k$. Furthermore, conditioning on $Y_{<k}$ does not change the conditional law of $g_k$ given $\mathcal G_k$, so $v_k=\E\Var(g_k\mid\mathcal G_k)$. The law of total variance therefore yields
\begin{equation}
V_k\ge U_k+v_k+P_k,\quad
P_k=\E\left(\E[S_k\mid\mathcal G_k]-\E[S_k\mid\mathcal F_k]\right)^2.
\label{eq:pythagoras}
\end{equation}
The exact conditional mean of the residual sum given $\mathcal G_k$ is $Z_{k,B}$, and the quantized reconstruction has mean-square error at most $\delta$. The conditional-mean optimality principle gives $V_{k-1}\le U_k+\delta$. Consequently,
\begin{equation}
v_k+V_{k-1}\le V_k+\delta,
\qquad P_k\le V_k-v_k-V_{k-1}+\delta.
\label{eq:variance-step}
\end{equation}

The next estimate is where centering becomes useful. Put $H_k=g_k(X_k,Y_k)-\E g_k$ and
\[
L_k=\E[H_k\mid\mathcal G_k],\qquad
\overline L_k=\E[H_k\mid\mathcal F_k].
\]
Expanding $S_k=S_{k-1}+H_k$ gives
\[
D_{k,B}
=\Bigl(\E[S_k\mid\mathcal G_k]-\E[S_k\mid\mathcal F_k]\Bigr)
-(L_k-\overline L_k).
\]
By \cref{eq:variance-step}, the expected square of the first parenthesis is at most $V_k-v_k-V_{k-1}+\delta$. Conditional expectation is an $L_2$ contraction, so the expected square of the second parenthesis is at most $\Var(g_k(X_k,Y_k))$. Consequently,
\begin{equation}
G_k
\le
2\bigl(V_k-v_k-V_{k-1}+\delta\bigr)
+2\Var(g_k(X_k,Y_k)).
\label{eq:energy}
\end{equation}

Summing \cref{eq:energy} over the peeling process makes the variance terms telescope. Since $V_n\le\Var(S_n)=n\sigma^2$,
\begin{equation}
\sum_{k=2}^n G_k\le 4n\sigma^2+2n\delta.
\label{eq:total-energy}
\end{equation}

\subsection{From energy to information}

Before the quantized vector $(K_{k,a})_a$ is appended, the information costs of the coordinate child and the residual child sum to at most that of the parent. The new vector is a function of Alice's view. It reveals no additional information about Bob's residual input to Alice, while the information it reveals about Alice's residual input to Bob is at most its conditional entropy. By subadditivity and \cref{lem:entropy}, the information increase at stage $k$ is at most
\[
c_0m+\frac12\sum_{a:q_a>0}
\log_2\!\left(1+c_1\E\frac{D_{k,a}^2}{h_{k,a}^2}\right).
\]
There are at most $nm$ logarithmic terms. Jensen's inequality, followed by \cref{eq:branch-energy} and \cref{eq:total-energy}, bounds their total by
\[
Cnm\log_2\!\left(2+\frac{\sigma^2}{\rho\delta}\right).
\]
Thus the extracted protocols satisfy
\begin{equation}
\sum_{i=1}^n\IC(\eta_i)
\le
\IC(\pi)+Cnm\log_2\!\left(2+\frac{\sigma^2}{\rho\delta}\right).
\label{eq:info-final}
\end{equation}

On the variance side, summing \cref{eq:variance-step} and using \cref{eq:smoothing-variance} gives
\begin{equation}
\sum_{i=1}^n\Phi_{\rm B}(\eta_i;g_i)
\le
\Phi_{\rm obs}\!\left(\pi;\sum_i g_i\right)
+(2\rho-\rho^2)n\sigma^2+n\delta.
\label{eq:variance-final}
\end{equation}
Choose $\delta=\Delta/2$ and $\rho=\Delta/(4\sigma^2)$. The last two terms in \cref{eq:variance-final} sum to at most $n\Delta$. Moreover, $\sigma^2/(\rho\delta)=8(\sigma^2/\Delta)^2$, and hence $\log_2(2+\sigma^2/(\rho\delta))=O(\log_2(2+\sigma^2/\Delta))$. Dividing by $n$ proves \cref{thm:main}.

\section{Consequences}\label{sec:consequences}

\subsection{Gap-Majority}
Sawettamalya and Yu show that every protocol computing $\GapMAJ_n\circ f^n$ with success probability $0.99$ satisfies
\[
\Phi_{\rm obs}\!\left(\pi;\sum_{i=1}^n f(X_i,Y_i)\right)\le 0.99n
\]
when $f$ is balanced~\cite{SY26}. Applying \cref{thm:main} with $m=2$ and $\Delta=0.008$, then choosing a uniform extracted coordinate, gives Bob variance at most $0.998$ and information cost $\IC(\pi)/n+O(1)$. For a binary variable, posterior minority mass $p\le1/2$ satisfies $4p(1-p)\ge 2p$. Bob's posterior-majority decision consequently errs with probability at most $0.499$, proving \cref{cor:gapmaj}.

As $\IC(\pi)\le\mathrm{CC}(\pi)$, the same theorem yields a communication \emph{lower bound}. If $\IC_r^\mu(f;\alpha)$ is the minimum information cost of a single-instance protocol with success probability $\alpha$ under the same message convention, then
\[
\mathrm{CC}_r^{\mu^n}(\GapMAJ_n\circ f^n;0.99)
\ge n\bigl(\IC_r^\mu(f;0.501)-C\bigr).
\]
This does not bound the worst-case length of the extracted single-instance protocol.

\section{Terminal likelihood rank}\label{sec:rank}
The finite-answer theorem charges for every counterfactual terminal answer. A structural condition on the original output channel allows a different reduction: the terminal alphabet can be compressed before any coordinate is peeled. The compression need only preserve the squared conditional mean of the observable, rather than simulate the original answer.

Let $C=(M,R)$ be the common transcript and public randomness. Write
\[
q_c(da)=\Pr(A\in da\mid C=c),\qquad
K_c(y,da)=\Pr(A\in da\mid C=c,Y=y),
\]
Assume that $K_c(y,\cdot)\ll q_c$ for almost every $(c,y)$ and let $L_c(y,a)=dK_c(y,\cdot)/dq_c(a)$. The \emph{likelihood rank} at $c$ is the dimension of the linear span of the functions $L_c(y,\cdot)$ in $L_2(q_c)$, over almost every $y$. The construction below uses only the original terminal kernel. In the standard Borel setting, the finite-dimensional choices can be implemented measurably with fresh public randomness.

\begin{theorem}\label{thm:rank-reduction}
Suppose the original terminal channel has likelihood rank at most $s$ for almost every $c$. For every square-integrable observable $S=S(X,Y)$ there is a protocol with the same messages and public randomness, apart from additional independent public coins, whose terminal answer $J$ takes at most $s$ values and satisfies
\[
\E\Var(S\mid C,J,R')\le \E\Var(S\mid C,A).
\]
Here $R'$ denotes the added public randomness; Bob produces $J$ locally after the last message. In particular, the transcript information cost and the number of messages do not increase.
\end{theorem}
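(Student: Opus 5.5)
Since the conditional variance splits as $\E\Var(S\mid C)-\E\Var(\E[S\mid C,A]\mid C)$, it suffices to work one common view $c$ at a time. There I will build a channel $J$ for Bob whose between-answer variance of the conditional mean of $S$ is at least that of $A$. In fact I aim for equality after averaging over the new public coin $R'$.

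\textbf{Step 1 (conditional means are linear in likelihoods).} Fix $c$ and let $p_c$ be the law of $Y$ given $C=c$. By the rectangle property, given $C=c$ the pair $(X,Y)$ factorizes and $A$ is Bob-side, so $A$ depends on the inputs only through $Y$. Setting $\bar s_c(y)=\E[S\mid C=c,Y=y]$, we get
\[
m_c(a):=\E[S\mid C=c,A=a]=\int \bar s_c(y)\,L_c(y,a)\,p_c(dy).
\]
Choose a basis $\ell^1,\dots,\ell^{s'}$ ($s'\le s$) of the span of $\{L_c(y,\cdot)\}$ in $L_2(q_c)$, and write $L_c(y,a)=\sum_t \alpha_t(y)\ell^t(a)$. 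The vector $\lambda(a)=(\ell^1(a),\dots,\ell^{s'}(a))\in\mathbb R^{s'}$ then determines the posterior law of $Y$ given $A=a$, namely $L_c(\cdot,a)\,p_c$, and it does so linearly. Consequently $m_c(a)=\langle w_c,\lambda(a)\rangle$ for a fixed vector $w_c$.

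Because $\int L_c(y,a)\,p_c(dy)=1$, every $\lambda(a)$ lies in the affine hyperplane $\Pi=\{\lambda:\langle u_c,\lambda\rangle=1\}$. This hyperplane has dimension at most $s'-1$. Let $\nu_c$ be the law of $\lambda(A)$ under $q_c$. Its barycenter $\beta_c=\E\lambda(A)$ satisfies $\sum_t\alpha_t(y)\beta_c^t=\int L_c(y,a)\,q_c(da)=1$ for a.e.\ $y$.

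\textbf{Step 2 (simplex decomposition and the new channel).} Decompose $\nu_c$ as a mixture $\nu_c=\int\nu_{c,\theta}\,\kappa(d\theta)$. Each component $\nu_{c,\theta}$ should be supported on at most $s'$ points $\lambda(a_1^\theta),\dots,\lambda(a_{s'}^\theta)$ with weights $w^\theta_j$, and should have the same barycenter $\beta_c$. This is a Carathéodory-type decomposition of a probability measure on an $(s'-1)$-dimensional affine set. The new channel is then:
\begin{enumerate}
\item the public coin $R'$ draws $\theta\sim\kappa$, independently of everything else;
\item Bob, holding $y$, outputs $J=j$ with probability $w^\theta_j L_c(y,a^\theta_j)$.
\end{enumerate}
These probabilities sum to one for a.e.\ $y$, since $\sum_j w_j^\theta\lambda(a_j^\theta)=\beta_c$ and $\sum_t\alpha_t(y)\beta_c^t=1$. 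Bob uses only $Y$, $C$ and $R'$, and he sends nothing. Hence the transcript, the number of messages and $\IC$ are unchanged, and $J$ takes at most $s$ values.

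Direct computation gives two facts for the new channel: $\Pr[J=j\mid C=c,\theta]=w^\theta_j$, and the posterior of $Y$ given $J=j$ is $L_c(\cdot,a^\theta_j)\,p_c$. Hence $\E[S\mid C=c,\theta,J=j]=m_c(a^\theta_j)$, with the rectangle property again handling the $X$-side. The between-answer variance given $(c,\theta)$ is $\sum_j w_j^\theta\bigl(m_c(a^\theta_j)-\E S\!\mid\! c\bigr)^2$. Integrating over $\theta$ reproduces $\int (m_c-\E[S\mid c])^2\,d\nu_c$ exactly, which is the between-answer variance of $A$. The total-variance identity then gives the claimed inequality, in fact with equality.

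\textbf{Main obstacle.} The delicate step is the measurable decomposition of a general probability measure $\nu_c$ into finitely supported components with a common barycenter. The requirement that each component have at most $s'$ atoms comes from Carathéodory in dimension $s'-1$. The decomposition must also depend measurably on $c$.

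I would first prove it for finitely supported $\nu_c$, by repeatedly peeling off the largest multiple of a Carathéodory simplex containing $\beta_c$ (a greedy scheme). I would then pass to general $\nu_c$, either by a Choquet/Strassen-type argument or by coupling: draw $a\sim q_c$ and $a'\sim q_c$ publicly and decompose along the segment. Measurable selection in the standard Borel setting, as the paper already allows, should handle the dependence on $c$ and on the basis choice.
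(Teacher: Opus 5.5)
Your proof is correct in outline but takes a different route from the paper; the one step you leave open is true, though it rests on a real theorem. The skeleton is the same as the paper's:
\begin{itemize}
\item fix $c$ and note that $m_c$ is linear in the likelihood vector (the paper phrases this as $1,f\in V$);
\item take a positive cubature rule with at most $s'$ nodes that is exact on $V$ (your common barycenter $\beta_c$);
\item let Bob output $J=j$ with probability $w_jL_c(y,a_j)$.
\end{itemize}
Normalization, $\Pr[J=j\mid c]=w_j$ and $\E[S\mid c,J=j]=m_c(a_j)$ are checked the same way. One small correction to Step 1: you only need $S\perp A\mid(C,Y)$, which holds because $A$ is Bob's local output. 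The pair $(X,Y)$ itself need not factorize given $C$ when $\mu_i$ is correlated.

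The difference is how the nodes are chosen. The paper uses a single rule:
\begin{itemize}
\item Tchakaloff's theorem (Bayer--Teichmann form) for the $d+1$ functions $\varphi_1,\dots,\varphi_d,f^2$ gives at most $d+1$ nodes;
\item maximizing $\sum_j w_jf(a_j)^2$ over the polytope of nonnegative weights exact on $V$ gives a vertex with at most $d$ nodes, on which $\int f^2$ has not decreased;
\item since $f\in V$, the mean is unchanged.
\end{itemize}
So the paper obtains only the inequality, and its public coins serve only to make the finite-dimensional choices measurable.

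You instead randomize over an exact mixture of $\le s'$-node rules. This buys equality. Because the decomposition never looks at $S$, it also gives one channel that works for every square-integrable observable at once.

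The price is the step you flag, which is not a routine extension of your greedy finite-support argument. For a general $\nu_c$ it is an integral representation of the set of probability measures on $\Pi$ with barycenter $\beta_c$ by its extreme points. Those extreme points are exactly the measures with at most $s'$ affinely independent atoms. The representation holds (von Weizs\"acker--Winkler; Winkler), but not by the tools you name:
\begin{itemize}
\item Choquet's theorem does not apply directly, since this set is not compact when $\nu_c$ has unbounded support;
\item your two-point segment coupling is exact only when $s'=2$.
\end{itemize}

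If you only want the stated inequality, you can bypass the decomposition. It suffices to find one $\le s'$-node rule exact on $V$ with $\sum_j w_j\bigl(m_c(a_j)-\E[S\mid c]\bigr)^2\ge\int\bigl(m_c-\E[S\mid c]\bigr)^2\,dq_c$. The paper's Tchakaloff-plus-linear-programming argument produces exactly such a rule, with no integral-representation theorem needed.
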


\begin{proof}
Fix $C=c$ and suppress $c$ from the notation. Denote the conditional distribution of $Y$ by $p$, and let $V\subset L_2(q)$ be the likelihood space, of dimension $d\le s$. Since $A$ is produced from Bob's view, $S$ and $A$ are conditionally independent given $(C,Y)$. Put
\[
 u(y)=\E[S\mid C=c,Y=y],\qquad
 f(a)=\E[S\mid C=c,A=a].
\]
For $h\perp V$ in $L_2(q)$, we have $\E[h(A)\mid C=c,Y]=0$, so
\[
 \int h(a)q(da)=0,\qquad
 \int f(a)h(a)q(da)=\E[u(Y)h(A)\mid C=c]=0.
\]
Thus both $1$ and $f$ belong to $V$. Fix a basis $\varphi_1=1,\varphi_2,\ldots,\varphi_d$ of $V$ and consider the integrable functions
\[
 \varphi_1,\ldots,\varphi_d,\ f^2.
\]
The finite-dimensional positive cubature theorem~\cite{BT06} yields points $a_1,\ldots,a_N$ and positive weights $w_1,\ldots,w_N$, with $N\le d+1$, such that
\begin{equation}\label{eq:cubature}
 \sum_{j=1}^{N}w_j\varphi_i(a_j)=\int\varphi_i\,dq
 \quad(1\le i\le d),\qquad
 \sum_{j=1}^{N}w_j f(a_j)^2=\int f^2\,dq.
\end{equation}
The points can be taken in a common $q$-full-measure set where all the identities below hold. Since $\varphi_1=1$, the weights sum to one.

Among nonnegative weights supported on these points satisfying the first $d$ equations in \eqref{eq:cubature}, maximize $\sum_jw_jf(a_j)^2$. The feasible set is a compact polytope, and an extreme optimum has at most $d$ positive weights. We retain these points and weights, relabeling them $1,\ldots,N$ with $N\le d$. The moment equations for $V$ remain exact, while
\begin{equation}\label{eq:cubature-gain}
\sum_{j=1}^{N}w_j f(a_j)^2\ge \int f(a)^2q(da).
\end{equation}

Bob now replaces his original answer by the distribution
\begin{equation}\label{eq:compressed-kernel}
\Pr(J=j\mid C=c,Y=y)=w_j L(y,a_j).
\end{equation}
It is nonnegative. Since $L(y,\cdot)\in V$, the moment equations in \eqref{eq:cubature} imply $\sum_jw_jL(y,a_j)=\int L(y,a)q(da)=1$ for $p$-almost every $y$. Thus \eqref{eq:compressed-kernel} is a probability kernel. Moreover,
\[
 \Pr(J=j\mid C=c)=w_j,\qquad
 \E[S\mid C=c,J=j]=f(a_j).
\]
The second identity follows from conditional independence and Bayes' rule. By \eqref{eq:cubature-gain},
\[
\E\left[\E[S\mid C=c,J]^2\mid C=c\right]
\ge \E\left[\E[S\mid C=c,A]^2\mid C=c\right].
\]
The conditional mean of $S$ given $C=c$ is unchanged, so the law of total variance proves the claim.

For standard Borel spaces, the construction can be implemented measurably with additional public coins. First obtain a basis of $V$ by starting from $1$ and applying Gram--Schmidt to a public i.i.d. sequence of likelihood functions $L(Y',\cdot)$, where $Y'\sim p$; finite rank ensures that these functions span $V$ almost surely. Next, conditional on $C=c$, use independent public coins to sample an infinite i.i.d. sequence from $q_c$. The moment vector of $(\varphi_1,\ldots,\varphi_d,f^2)$ belongs to the relative interior of the convex hull of its essential range, hence almost surely belongs to the convex hull of some finite prefix of the sampled vectors. Choose the first such prefix and a lexicographically selected feasible basic solution of the finite linear system \eqref{eq:cubature}; then perform the extreme-point optimization above. All selections involve finite-dimensional measurable linear programs. The fresh coins are independent of the original inputs and are known to both players, so the conditional argument and information cost are unchanged.
\end{proof}

\begin{corollary}\label{cor:rank-directsum}
Let the inputs and square-integrable observables satisfy the assumptions of \cref{thm:main}, except that the terminal answer can take values in a standard Borel space. If the original terminal channel has likelihood rank at most $s$ almost surely, then for every $0<\Delta\le\sigma^2$ there are $r$-message single-coordinate protocols such that
\[
\begin{aligned}
\frac1n\sum_i\IC_{\mu_i}(\eta_i)
&\le \frac{\IC(\pi)}n +Cs\log_2\!\left(2+\frac{\sigma^2}{\Delta}\right),\\
\frac1n\sum_i\Phi_{\rm B}(\eta_i;g_i)
&\le \frac1n\Phi_{\rm obs}\!\left(\pi;\sum_i g_i\right)+\Delta.
\end{aligned}
\]
\end{corollary}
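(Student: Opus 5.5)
The plan is to compose \cref{thm:rank-reduction} with \cref{thm:main}. We first compress the terminal answer to at most $s$ symbols. We then run the finite-answer peeling on the compressed protocol with $m=s$.

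\textbf{Step 1: compress the answer.} Take the observable in \cref{thm:rank-reduction} to be $S=\sum_i g_i(X_i,Y_i)$, which is square integrable. This gives a protocol $\pi'$ with the same messages and public randomness as $\pi$, plus independent public coins $R'$. Bob outputs a locally generated answer $J$ with at most $s$ values, and
\[
\E\Var(S\mid C,J,R')\le \E\Var(S\mid C,A).
\]
Because $R'$ is public and independent of the inputs, we have $\IC(\pi')=\IC(\pi)$. The left side is $\Phi_{\rm obs}(\pi';S)$, where the public randomness is now $(R,R')$. So $\Phi_{\rm obs}(\pi';S)\le\Phi_{\rm obs}(\pi;S)$ and the message count is still $r$.

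\textbf{Step 2: check the hypotheses of \cref{thm:main} for $\pi'$.} Conditioned on $(M,R,R')$, the support of $J$ lies in the at most $s$ labels $\{1,\dots,N\}$. These are enumerated measurably by the lexicographic selection described in the proof of \cref{thm:rank-reduction}. Since $J$ is produced by Bob from $(C,Y,R')$ and private randomness, it is Bob-side postprocessing. Hence the rectangle property used in \cref{sec:energy} is preserved, as noted in \cref{sec:prelim}.

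\textbf{Step 3: apply \cref{thm:main} to $\pi'$ with $m=s$ and combine.} The theorem produces $r$-round single-coordinate protocols $\eta_i$ with
\[
\frac1n\sum_i\IC_{\mu_i}(\eta_i)\le \frac{\IC(\pi)}n+Cs\log_2(2+\sigma^2/\Delta)
\]
and Bob-variance bound $\frac1n\Phi_{\rm obs}(\pi';S)+\Delta\le\frac1n\Phi_{\rm obs}(\pi;S)+\Delta$. Note that $\sigma^2$ depends only on the input distributions and the $g_i$, so it is unchanged by the compression.

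\textbf{Main obstacle.} The delicate step is justifying that \cref{thm:main} applies to $\pi'$ with public randomness $(R,R')$ in place of $R$. The posterior $q_j(M,R,R')=w_j$ must be a fixed measurable function of the common view, which it is by construction. The invariant of \cref{lem:invariant} must also hold with $(M,R,R')$ as the common view. I would handle this by treating $R'$ as part of the public string from the outset, so the whole peeling argument goes through verbatim.
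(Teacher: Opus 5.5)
Your proposal is correct and is the paper's proof: apply \cref{thm:rank-reduction} to $S=\sum_i g_i$ to obtain an at-most-$s$-symbol Bob-side answer with no larger observer variance and unchanged information cost, then invoke \cref{thm:main} with $m=s$. Your extra checks (absorbing $R'$ into the public string, the enumeration of the labels $\{1,\dots,N\}$, and the Bob-side postprocessing) only make explicit what the paper's two-line argument leaves implicit.
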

\begin{proof}
Apply \cref{thm:rank-reduction} with $S=\sum_i g_i$ and then \cref{thm:main} to its at-most-$s$-symbol terminal answer. The additional public coins do not change the internal information cost.
\end{proof}

The rank bound concerns the original channel, not an intermediate protocol. For example, if $Y,A$ range over $\{0,\ldots,m-1\}$ and
\[
\Pr(A=a\mid Y=y)=\frac{1-\cos(2\pi(y-a)/m)}m,
\]
with $Y$ uniform, then the answer alphabet grows with $m$, but its likelihood rank is at most three. The conclusion of \cref{cor:rank-directsum} is independent of $m$.

\section{Arbitrary terminal outputs and mean-square estimation}\label{sec:mmse}
For terminal channels of unrestricted rank, a quantitative version is available if the sum has a finite fourth moment. Throughout this section the inputs and observables are i.i.d.

\begin{theorem}\label{thm:mmse}
Let $g(X,Y)$ be centered, with variance $\sigma^2>0$ and fourth moment finite. Write $S_n=\sum_{i=1}^n g(X_i,Y_i)$ and $\kappa_n=\E S_n^4/(n^2\sigma^4)$. If an $r$-message protocol with an arbitrary terminal answer satisfies
\[
\Phi_{\rm obs}(\pi;S_n)\le(1-\gamma)n\sigma^2,
\qquad 0<\gamma\le1,
\]
then an $r$-message single-copy protocol $\eta$ satisfies
\[
\begin{aligned}
\Phi_{\rm B}(\eta;g)&\le(1-c\gamma)\sigma^2,\\
\IC(\eta)&\le\frac{\IC(\pi)}n
 +C\log_2\!\left(2+\frac{\sqrt{\kappa_n}}\gamma\right)
 \log_2\!\left(2+\frac1\gamma\right),
\end{aligned}
\]
for absolute constants $c,C>0$.
\end{theorem}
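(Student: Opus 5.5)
We plan to reduce \cref{thm:mmse} to \cref{thm:main} by replacing the arbitrary terminal answer $A$ with a finite one. Bob will output a quantized estimate of the observer's conditional mean, and we will pay only a constant fraction of the advantage $\gamma$. Let $C=(M,R)$ and $f(C,A)=\E[S_n\mid C,A]$. By total variance, the hypothesis is equivalent to an explained-variance bound
\[
\E f(C,A)^2\ge\gamma n\sigma^2.
\]
We will post-process $A$ locally on Bob's side into a new answer $J=Q(C,A)$, where $Q$ is a quantizer with $m$ cells. We want to keep most of this explained variance:
\[
\E\,\E[S_n\mid C,J]^2\ge \tfrac{\gamma}{2}n\sigma^2.
\]
Since $J$ is a function of $(C,A)$, it is still Bob-side postprocessing, so the transcript and the number of messages are unchanged. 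We then apply \cref{thm:main} to the new protocol with $\Delta=\gamma\sigma^2/4$. Averaging over coordinates in the i.i.d. case gives
\[
\Phi_{\rm B}(\eta;g)\le(1-\gamma/4)\sigma^2,
\]
at information cost $\IC(\pi)/n+Cm\log_2(2+4/\gamma)$. It therefore suffices to show that
\[
m=O\bigl(\log_2(2+\sqrt{\kappa_n}/\gamma)\bigr)
\]
cells suffice.

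The key step is the choice of quantizer. We normalize $F=f(C,A)/\sqrt{n\sigma^2}$, so that $\E F^2\ge\gamma$. By conditional Jensen, $\E F^4\le\E S_n^4/(n\sigma^2)^2=\kappa_n$. We use dyadic (logarithmic) cells on the magnitude of $F$, together with its sign:
\[
\{|F|<\epsilon\},\qquad \{2^j\epsilon\le|F|<2^{j+1}\epsilon\}\ (0\le j<J_0),\qquad \{|F|\ge T\}.
\]
We take $\epsilon=\sqrt{\gamma}/4$ and a truncation level $T$ with $T^2\ge 8\kappa_n/\gamma$. Then the tail $\{|F|\ge T\}$ carries squared mass $\E F^2\mathbf 1_{|F|\ge T}\le\kappa_n/T^2\le\gamma/8$. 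The central cell carries squared mass at most $\gamma/16$. The number of cells is
\[
2\log_2(T/\epsilon)+O(1)=O\bigl(\log(2+\kappa_n/\gamma^{2})\bigr)=O\bigl(\log(2+\sqrt{\kappa_n}/\gamma)\bigr).
\]

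Next we lower-bound the retained energy. Let $\bar F_J=\E[F\mid C,J]$. Within a dyadic cell, $F$ has a fixed sign and varies by at most a factor of $2$, so $|\bar F_J|\ge|F|/2$ on that cell. Hence each cell retains at least a quarter of its squared mass. Summing over cells and discarding the tail and central cells gives
\[
\E\bar F_J^2\ge\tfrac14(\gamma-\gamma/8-\gamma/16)\ge\gamma/8 .
\]
This is the needed bound with constant $c=1/8$ instead of $1/2$, which only changes the constant $c$ in the conclusion.

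One more factor must be accounted for. \cref{thm:main} charges $m$ times a logarithm, which gives the product form $\log(2+\sqrt{\kappa_n}/\gamma)\log(2+1/\gamma)$ claimed in the statement.

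The main obstacle is measurability and the dependence on $C$. The quantizer is applied per transcript, but its thresholds are global after normalization. This is fine because $F$ is jointly measurable in $(C,A)$.

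A second issue is that \cref{thm:main} requires the conditional support of $J$ to have size at most $m$ with a measurable enumeration. This holds trivially because the cell labels are fixed.

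I expect the delicate point to be verifying that the resampling step inside \cref{thm:main} interacts correctly with this new answer. Since \cref{thm:main} treats any finite-support Bob answer as a black box, I would simply invoke it with $A$ replaced by $J$. The only check is that the Bob variance bound uses $\Phi_{\rm obs}$ computed with $J$, which is at most $(1-\gamma/8)n\sigma^2$ by the energy estimate above.
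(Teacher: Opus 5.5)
Your argument is correct and is essentially the paper's. You quantize the observer's conditional mean by sign and a geometric grid of magnitudes, cut off the small values and (via the fourth moment) the large ones, and feed the resulting $O(\log(2+\sqrt{\kappa_n}/\gamma))$-symbol Bob-side answer into \cref{thm:main}. Quantizing $\E[S_n\mid C,A]$ directly, instead of the paper's centered innovation $U=\E[S_n\mid C_0,A]-\E[S_n\mid C_0]$, harmlessly removes the paper's case split.

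One constant needs fixing. You retain only $\gamma/8$ rather than $\gamma/2$ of the energy, so $\Delta=\gamma\sigma^2/4$ would give the vacuous bound $(1+\gamma/8)\sigma^2$. Take $\Delta=\gamma\sigma^2/16$ instead. This yields $\Phi_{\rm B}(\eta;g)\le(1-\gamma/16)\sigma^2$, and the information bound keeps the same form.
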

\begin{proof}
Put $C_0=(M,R)$ and $U=\E[S_n\mid C_0,A]-\E[S_n\mid C_0]$. If $C_0$ alone explains at least $\gamma n\sigma^2/2$ of the variance, discard the terminal answer and apply \cref{thm:main} with a singleton answer and $\Delta$ a sufficiently small multiple of $\gamma\sigma^2$. Otherwise orthogonality gives
\[
\E U^2\ge\frac{\gamma n\sigma^2}2,
\qquad
\E U^4\le16\kappa_n n^2\sigma^4.
\]
Set $a=\sqrt{\gamma n\sigma^2}/16$ and $b=64\sqrt{\kappa_n/\gamma}\sqrt{n\sigma^2}$. Map $U$ to zero when $|U|<a$ or $|U|>b$. On the remaining range, retain the sign and round the magnitude down to a geometric grid whose ratio is at most $5/4$. The resulting discrete variable $Q$ satisfies
\[
\E(U-Q)^2\le\frac1{25}\E U^2+\frac{\gamma n\sigma^2}{128}.
\]
Indeed, the two discarded tails contribute at most $a^2$ and $\E U^4/b^2$, respectively. Since the conditional expectation $\E[U\mid C_0,Q]$ is the least-squares predictor of $U$, this quantization preserves a constant fraction of $\E U^2$. There are
\[
 m=O\left(\log\left(2+\frac{\sqrt{\kappa_n}}\gamma\right)\right)
\]
possible values of $Q$. Applying \cref{thm:main} to the protocol with terminal answer $Q$, with $\Delta$ a small multiple of $\gamma\sigma^2$, proves the theorem.
\end{proof}

For a fixed square-integrable observable with finite fourth moment,
\[
\kappa_n=3+\frac{\E g^4/\sigma^4-3}{n}.
\]
Consequently the additive information loss in \cref{thm:mmse} is $O_g(\log^2(2+1/\gamma))$. A single logarithm is sufficient under an additional condition on the posterior gain.

\begin{corollary}\label{cor:smallball}
In the setting of \cref{thm:mmse}, suppose $U$ satisfies $\E U^4\le K(\E U^2)^2$ whenever $\E U^2>0$, where $K\ge1$. Then the conclusion holds with
\[
\Phi_{\rm B}(\eta;g)\le(1-c\gamma/K)\sigma^2,
\qquad
\IC(\eta)\le\frac{\IC(\pi)}n+C\log_2(2+K/\gamma).
\]
\end{corollary}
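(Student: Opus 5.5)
We follow the proof of \cref{thm:mmse}, but use the hypercontractive condition on $U$ in place of the crude fourth-moment bound $\E U^4\le 16\kappa_n n^2\sigma^4$. This lets the quantization of $U$ use a number of levels that depends only on $K/\gamma$ and not on $\kappa_n$. As before, put $C_0=(M,R)$ and $U=\E[S_n\mid C_0,A]-\E[S_n\mid C_0]$.

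We split into the same two cases.
\begin{enumerate}
\item If $C_0$ alone explains at least $\gamma n\sigma^2/2$ of the variance, drop the terminal answer. Applying \cref{thm:main} with $m=1$ and $\Delta$ a small multiple of $\gamma\sigma^2$ gives loss $O(\log(2+1/\gamma))$, which is already within the claimed bound.
\item Otherwise $\E U^2\ge\gamma n\sigma^2/2$. Write $E=\E U^2$. Since $E\le\Var(S_n)=n\sigma^2$, the answer explains a fraction between $\gamma/2$ and $1$ of $n\sigma^2$.
\end{enumerate}

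In the second case we use a one-bit (three-valued) quantizer, which is where the single logarithm comes from. By Paley--Zygmund applied to $U^2$,
\[
\Pr\bigl[U^2\ge E/2\bigr]\ge \frac{(E/2)^2}{\E U^4}\ge \frac1{4K}.
\]
Set $Q=\mathrm{sign}(U)\mathbf 1\{|U|\ge\sqrt{E/2}\}$. Then the least-squares predictor $\E[U\mid C_0,Q]$ captures a $\Omega(1/K)$ fraction of $E$. To check this, test against the predictor $\lambda Q$ and optimize over $\lambda$. The captured energy is at least
\[
\frac{(\E|U|\mathbf 1\{|U|\ge\sqrt{E/2}\})^2}{\Pr[|U|\ge\sqrt{E/2}]}\ge \frac{E}{2}\Pr\bigl[|U|\ge\sqrt{E/2}\bigr]\ge \frac{E}{8K}.
\]
The replacement answer $Q$ is a local postprocessing of Bob's answer $A$ given $C_0$. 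Because $U$ is a function of $(C_0,A)$, the transcript is unchanged, and the observer variance of $S_n$ with answer $Q$ is at most $n\sigma^2(1-\gamma/(16K))$.

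We then apply \cref{thm:main} with $m=3$ and $\Delta=c'\gamma\sigma^2/K$ for a small constant $c'$. This gives $\Phi_{\rm B}(\eta;g)\le(1-c\gamma/K)\sigma^2$ after averaging over a uniform coordinate, which is legitimate in the i.i.d.\ setting. The information loss is $O(\log_2(2+K/\gamma))$, matching the claim.

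The main obstacle is the orthogonality bookkeeping. We must confirm that the fraction of variance explained by $(C_0,Q)$ beyond $C_0$ is exactly $\E(\E[U\mid C_0,Q])^2$. This needs $\E[U\mid C_0]=0$, which holds by the definition of $U$. If instead one wants $Q$ to preserve a constant fraction of $E$, as in the statement's form of \cref{thm:mmse}, one would need $O(\log K)$ dyadic levels of $|U|$. That only costs $m=O(\log K)$ and is still absorbed into $C\log_2(2+K/\gamma)$. The three-level version already suffices for the stated $(1-c\gamma/K)$ guarantee, so we use it.
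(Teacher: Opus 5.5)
Your proof is correct and follows essentially the paper's route. The paper also uses a singleton answer in the first case and otherwise replaces the answer by a sign quantization of $U$. It uses plain $\operatorname{sgn}(U)$ with the interpolation bound $(\E|U|)^2\ge(\E U^2)^3/\E U^4\ge \E U^2/K$, where you use a thresholded sign with Paley--Zygmund; this costs only a constant factor.

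Your closing aside is wrong, though you rightly do not rely on it. With $m=O(\log(2+K))$ dyadic levels, \cref{thm:main} charges $m\log_2(2+\sigma^2/\Delta)$. That gives $O(\log(2+K)\log(2+1/\gamma))$, which is not absorbed into $C\log_2(2+K/\gamma)$.
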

\begin{proof}
If the transcript alone explains at least $\gamma n\sigma^2/2$, use a singleton answer. Otherwise $\E U^2\ge\gamma n\sigma^2/2$. Bob replaces the answer by $\operatorname{sgn}(U)$. Conditional projection and $L_1$--$L_2$--$L_4$ interpolation give
\[
\E\left[\E[U\mid C_0,\operatorname{sgn}(U)]^2\right]
\ge (\E|U|)^2
\ge \frac{(\E U^2)^3}{\E U^4}
\ge\frac{\gamma n\sigma^2}{2K}.
\]
Apply \cref{thm:main} with a binary answer and $\Delta$ a sufficiently small multiple of $\gamma\sigma^2/K$.
\end{proof}

\section*{AI assistance}
OpenAI ChatGPT was used during exploratory derivations, proof review, literature checking, and manuscript editing. The author is responsible for the arguments and statements in this manuscript.

\appendix

\section{One-step peeling and information identities}\label{app:peeling}
We record the information and variance identities used in the proof. They follow the coordinate decomposition of~\cite{SY26}, with the terminal auxiliary message changed as described in Section~\ref{sec:proof}.

Consider a parent protocol on $(X_1,Y_1),\ldots,(X_k,Y_k)$ drawn from $\prod_{i=1}^k\mu_i$. In the residual child, $(X_{<k},Y_{<k})$ are the real inputs. The players publicly sample $Y_k$ from its marginal, and Alice privately samples $X_k$ from $\mu_k(\cdot\mid Y_k)$. In the coordinate child, $(X_k,Y_k)$ are the real inputs. The players publicly sample $X_{<k}$ from its marginal, and Bob privately samples $Y_{<k}$ from the appropriate conditional product distribution. Both children can simulate the parent transcript $M$ and Bob's original terminal output, and both reproduce the parent's joint law before the extra message is added.

The internal information costs of the two simulated transcripts can be compared by conditioning additionally on the private randomness used to sample missing coordinates. This gives an upper bound because the missing coordinate is sampled independently of the real input of the other player, conditional on the public samples and that player's real input. For the Alice-input contributions, the two terms are bounded by
\[
I(M:X_{<k}\mid Y^k,R),\qquad
I(M:X_k\mid X_{<k},Y^k,R),
\]
respectively; for the Bob-input contributions, they are bounded by
\[
I(M:Y_{<k}\mid X^k,Y_k,R),\qquad
I(M:Y_k\mid X^k,R).
\]
In both cases the bounds sum, by the chain rule, to the corresponding information term of the parent. Thus the total information cost before adding the auxiliary message is at most $\IC(\pi)$. This argument applies to every residual protocol, since its public samples remain independent of the real child inputs.

Only the residual child receives the quantized vector $K=(K_{k,a})_a$, appended to Alice's final message. Since $K$ is computed from Alice's view, it reveals no additional information about Bob's real input conditional on Alice's input and private randomness. The additional Alice-input information is bounded by
\[
I(K:X_{<k}\mid M,Y_{<k},Y_k,R)
\le H(K\mid M,Y_k,R),
\]
which is the conditional entropy estimated in the main proof.

For the variance calculation, condition on the parent transcript and public randomness, the resampled answer $B$, the exposed $Y_k$, and $X_{<k}$. A private-coin two-party transcript has the rectangle property. More explicitly, after conditioning on $M=m$, $R=r$, and Bob-side output $B=a$, the density of the input pairs is proportional to
\[
\left(\prod_{i=1}^k\mu_i(x_i,y_i)\right)
\alpha_{m,r}(x_1,\ldots,x_k)\,\beta_{m,r,a}(y_1,\ldots,y_k).
\]
Fixing $X_{<k}$ and $Y_k$ separates the remaining factors involving $X_k$ from those involving $Y_{<k}$. Thus $X_k$ and $Y_{<k}$ are conditionally independent, and so are $g_k(X_k,Y_k)$ and the residual sum $S_{k-1}$. The same factorization shows that conditioning on $Y_{<k}$ does not change the conditional law of $X_k$. More generally, at each recursive stage the previously appended auxiliary data only changes the Alice-side factor $\alpha$, because each such vector is generated before the terminal output and from Alice's view and public samples. Bob's deferred terminal kernel only changes the Bob-side factor $\beta$. Thus the conditional independence remains valid for the complete residual transcript, not just for the first decomposition step.

Consequently, with the sigma-fields $\mathcal F_k,\mathcal G_k$ defined in the main proof, the law of total variance yields
\[
\begin{aligned}
\E\Var(S_k\mid\mathcal F_k)
&=\E\Var(S_k\mid\mathcal G_k)
 +\E\bigl(\E[S_k\mid\mathcal G_k]-\E[S_k\mid\mathcal F_k]\bigr)^2,\\
\E\Var(S_k\mid\mathcal G_k)
&=\E\Var(S_{k-1}\mid\mathcal G_k)+v_k.
\end{aligned}
\]
Since conditioning additionally on $Y_k$ cannot increase expected conditional variance, these identities give~\eqref{eq:pythagoras}. Finally, Bob's reconstructed conditional mean of $S_{k-1}$ has mean-square error at most $\delta$ relative to $\E[S_{k-1}\mid\mathcal G_k]$. The optimality of conditional expectation then implies $V_{k-1}\le U_k+\delta$, which is~\eqref{eq:variance-step}.

\end{document}